\newtheorem{lemma}{Lemma}
\newtheorem{theorem}{Theorem}
\newtheorem{proposition}{Proposition}
\newtheorem{remark}{Remark}
\newtheorem{corollary}{Corollary}
\DeclareMathOperator{\Tr}{Tr}
\DeclareMathOperator{\Id}{Id}
\title{\LARGE \bf
Assets Defending Differential Games with Partial Information and Selected Observations
}
\author{Yunhan Huang, Juntao Chen and Quanyan Zhu% <-this % stops a space
%\thanks{This work was not supported by any organization}% <-this % stops a space
\thanks{ Y. Huang and Q. Zhu are with the Department of Electrical and Computer Engineering,
        New York University, 370 Jay St., Brooklyn, NY 11201.
        {\tt\small \{yh.huang, qz494\}@nyu.edu}.}
\thanks{J. Chen is with the Department of Computer and Information Sciences,  Fordham University,  New York,  NY 10023.
        {\tt\small jchen504@fordham.edu}
        }%%
}
\begin{document}

\maketitle
\thispagestyle{plain}
\pagestyle{plain}

%%%%%%%%%%%%%%%%%%%%%%%%%%%%%%%%%%%%%%%%%%%%%%%%%%%%%%%%%%%%%%%%%%%%%%%%%%%%%%%%
\begin{abstract}
In this paper, we consider a linear-quadratic-Gaussian defending assets differential game (DADG) where the attacker and the defender do not know each other's state information while they know the trajectory of a moving asset. Both players can choose to observe the other player's state information by paying a cost. The defender and the attacker have to craft both control strategies and observation strategies. We obtain a closed-form feedback solution that characterizes the Nash control strategies. We show that the trajectory of the asset does not affect both players' observation choices. Moreover, we show that the observation choices of the defender and the attacker can be decoupled and the Nash observation strategies can be found by solving two independent optimization problems. A set of necessary conditions is developed to characterize the optimal observation instances. Based on the necessary conditions, an effective algorithm is proposed to numerically compute the optimal observation instances. A case study is presented to demonstrate the effectiveness of the optimal observation instances.
\end{abstract}

%%%%%%%%%%%%%%%%%%%%%%%%%%%%%%%%%%%%%%%%%%%%%%%%%%%%%%%%%%%%%%%%%%%%%%%%%%%%%%%%
\section{Introduction}
With recent advances in Autonomous Vehicles (AV) technologies, AV application scenarios emerge in modern military operations such as surveillance, persistent area denial \cite{liu2007pop}, pursuit-evasion \cite{weintraub2020introduction,talebi2017cooperative}, and assets defending\cite{li2011defending,garcia2020two}. Assets defending scenarios describe a setting where attackers attempt to intercept assets and defenders or interceptors strive for defending the assets. Assets defending scenarios pose challenging control design problems for AVs because AVs deployed often confront intelligent rivals with mobility and strategic decision making. Differential game theory offers the right set of theoretical underpinnings to investigate assets defending scenarios and to develop optimal strategies for each player. Hence, several works have addressed different assets defending problems formulated as differential games\cite{li2011defending,garcia2018design,garcia2020defense,garcia2020two,liang2019differential,pachter2017differential}. The formulations often times are referred to as Defending Assets Differential Games (DADG). Among the various DADG models adopted, the linear quadratic differential game formulation is favored due to its analytical friendliness \cite{li2011defending,garcia2020defense}. A common assumption taken for granted in previous studies of DADG is that state information is freely available any time to both the attacker and the defender. However, in reality, state information, especially information regarding one's opponent, is not accessible and usually is expensive to acquire. For example, in naval warfare, the detection of aircraft carriers is a challenging task considering the vastness of the ocean and the search for adversarial submarines due to its stealthiness. The detection of such military units comes with monetary expenses, risks of exposing oneself, and loss of surveillance aircraft.

To fill the vacuum, in this work, we study DADG with partial information where the attacker and the defender only have access to their own state information. The attacker does not know the state information of the defender, unless the attacker chooses to observe the defender's information by paying a cost. So does the defender. We assume that both the defender and the attacker know the trajectory of the asset. The DADG is a dynamic game with a pre-specified duration. Since the two players are constantly moving, the information obtained earlier may deteriorate over time and a new observation need to be made. Thus, both players have to decide when to observe and how many times to observe within the duration. At the same time, control strategies need to be developed based on the observed information. 

The join design of observation strategies and control strategies with costly observations have been investigated by several papers \cite{cooper1971optimal,huang2019continuous,olsder1977observation,huang2020infinite,maity2016strategies,maity2017linear,huang2020cross}. In 70s, Cooper studied an discrete-time optimal control problem where the controller decides at each step whether to observe an noisy observation or not \cite{cooper1971optimal}. Olsder later extended this study into a two-player discrete-time dynamic game setting where each player chooses when to observe and a solution is obtained for a two-stage dynamic game\cite{olsder1977observation}. More recently, \cite{huang2019continuous} looked into controlled observation for continuous-time Markov decision processes, where applications in queueing systems and inventory management have been studied. In \cite{maity2016strategies} and \cite{maity2017linear}, Maity et al. extends the problem formulated in \cite{olsder1977observation} to a linear quadratic continuous-time setting, where each player has no state information at all unless they choose to observe. But in \cite{maity2016strategies} and \cite{maity2017linear}, the solution of the observation strategies are not provided and only some properties regarding the solution are identified. \cite{huang2020cross} studies a discrete-time dynamic game with controlled yet noisy observations where one player acts as a jammer that intercepts the observation of the other player.

Our work differs from the previous works in three ways: First, we focus on DADG which has different cost structure and system dynamics from previous works. Second, both the attacker and the defender knows their own information, which gives each player partial information even when they choose not to observe. They obtain full information when they choose to observe. Both players do not share their information, which causes asymmetry of information. Third, we characterize the control strategies fully and develop an effective algorithm that numerically calculate the observation instances. 

The contributions of this paper is summarized as follows. First, we abandon the common yet unrealistic assumption that information is freely available all the time in previous DADG works. A linear-quadratic-Guassian DADG framework with controlled partial information is proposed. Second, we fully characterize the Nash control strategies and develop a set of necessary conditions that characterize the optimal observation strategy. We shows the separation principle, in which the observation choices only affect the state estimate in the control strategies. Analytical results show that the observation choices are independent from the trajectory of the asset. We further show that the observation decision of the attacker and the observation decision of the defender are decoupled. Hence, either the attacker or the defender can make observation choices without anticipating each other's choices. As a result, the Nash observation strategies can be obtained by solving two independent optimization problems. Third, we develop an effective algorithms that can numerically compute the optimal observation instances. We demonstrate the effectiveness of the optimal observation instances by comparing it with the periodic observation instances.

The rest of the paper is organized as follows. In Sec. \ref{sec:ProblemForm}, we formulate the linear-quadratic-Gaussian DADG with controlled information. Sec. \ref{Sec:TheoreticalResults} gives the theoretical results regarding the Nash control strategies and the Nash observation strategies. In Sec. \ref{sec:CaseStudy}, we conduct a case study to demonstrate the effectiveness of the optimal observation instances.

\subsection{Notation}
For a vector $x$ or a matrix $M$, $x'$ and $M'$ represents the transpose of the vector $x$ and the matrix $M$ respectively. An $n\times n$ identity matrix is denoted by $\Id_n$. $\Vert \cdot \Vert_2$ is the $L$-$2$ norm. For a vector with proper dimension, the norm $\Vert \cdot \Vert_Q$ is defined as $\Vert x\Vert_Q = x'Qx$. The set of all real numbers is denoted by $\mathbb{R}$ and $\mathbb{N}$ denotes the set of all natural numbers. The trace operator is denoted by $\Tr(\cdot)$. The Kronecker product is represented by $\otimes$.

\section{Defending an Asset with Controlled Information}\label{sec:ProblemForm}

In this section, we formulate a DADG with controlled observation. The dynamics of each player is described by the following linear systems 
\begin{equation}\label{eq:SeparateDynamics}
\begin{aligned}
d{x}_a &= A_a x_a(t)dt + \tilde{B}_a u_a(t)dt + C_a {dw_a(t)},\ \ \ x_a(0) = x_{a0}\\
d{x}_d &= A_d x_d(t)dt + \tilde{B}_d u_d(t)dt + C_d {dw_d(t)},\ \ \ x_d(0) = x_{d0}\\
\end{aligned}
\end{equation}
where $x_a\in\mathbb{R}^n,x_d\in\mathbb{R}^n$ are states of the attacker and the defender; $u_a\in U_a,u_d \in U_d$ are controls inputs the corresponding players; $w_a$ and $w_d$ are independent standard Wiener processes. $A_a,A_d,\tilde{B}_a,\tilde{B}_d,C_a$, and $C_d$ are real matrices with proper dimensions. The time index is denoted by $t$ and the consider a finite-time horizon $[0,t_f]$. Let $x_s(t)$ be the location of the asset at time $t$ and the trajectory of the asset is given and known to both the attacker and the defender. In this paper, we consider the cases of a stationary asset and an asset with an arbitrary trajectory. We assume that there is an auxiliary linear system that captures the trajectory $x_s(\cdot)$ of the asset
$
\dot{x}_s = A_s x_s,\ x_s(0) = x_{s0}.
$ 
This assumption is introduction for analysis purpose and is not necessary, as we will show later. From a systematic point of view, we can formulate an aggregate system as
\begin{equation}\label{eq:AggregatedStateDynamics}
d{x}(t) = Ax(t)dt  +  B_a u_adt + B_d u_ddt  + C dw(t),\ \textrm{with }x(0) = x_0,
\end{equation}
where $x = [x_a'\ x_d'\ x_s']'$, $B_a = [\tilde{B}_a'\ \mathbf{0}\ \mathbf{0}]'$, $B_d = [\mathbf{0}\ \tilde{B}_d'\ \mathbf{0}]'$, $w = [w_a'\ w_d'\ \mathbf{0}]'$,
$$
A= \begin{bmatrix}
A_a& \mathbf{0} & \mathbf{0} \\
\mathbf{0} & A_d & \mathbf{0} \\
\mathbf{0} & \mathbf{0} & A_s
\end{bmatrix},\ \ \ \text{and } C = \begin{bmatrix}
C_a& \mathbf{0} & \mathbf{0} \\
\mathbf{0} & C_d & \mathbf{0}\\
\mathbf{0} & \mathbf{0} & \mathbf{0}
\end{bmatrix}.
$$

In this paper, we consider a situation where the defender and the attacker can select a set of time instances to observe one's opponent state. The information structure of the attacker and the defender is summarized as follows: 1. Both the defender and the attacker know the trajectory of the asset. 2. The defender and the attacker know their own state, but they don't know each other's state. 3. Each player can choose to observe the other player's state by paying a cost.

Let $\mathcal{T}_a = \{t_{1,a},t_{2,a},\cdots,t_{N_a,a}\}$ be the set of time instances when the attacker choose to observe. Let $\mathcal{T}_d =\{ t_{1,d},t_{2,d},\cdots, t_{N_d,d}\}$ be that of the defender. Here, $N_a$ and $N_d$ are the number of observations made by the attacker and the defender respectively within time horizon $[0,t_f]$. Let $y_a(t)$ and $y_d (t)$ be the observations of the attacker and the defender respectively. We can write the above description of the information structure as
$$
\begin{aligned}
y_a(t)  &= \begin{bmatrix}
\Id_n & \mathbf{0} & \mathbf{0}\\
\mathbf{0} & \mathbf{0} & \mathbf{0}\\
\mathbf{0} & \mathbf{0} & \Id_n
\end{bmatrix}x(t),\ \ \ \textrm{for }t\notin \mathcal{T}_a\\
y_d(t) &= \begin{bmatrix}
\mathbf{0} & \mathbf{0} & \mathbf{0}\\
\mathbf{0} & \Id_n & \mathbf{0}\\
\mathbf{0} & \mathbf{0} & \Id_n
\end{bmatrix}x(t),\ \ \ \textrm{for }t\notin \mathcal{T}_d,\\
y_a(t) &= x(t)\ \ \ \textrm{for }t\in\mathcal{T}_a,\ \ \  y_d(t) = x(t),\ \ \ \textrm{for }t\in  \mathcal{T}_d.
\end{aligned}
$$
Let $\mathcal{I}_a(t)$ be the information the attacker has at time $t$ and $\mathcal{I}_d(t)$ be that of the defender. Given $\mathcal{T}_a,\mathcal{T}_d$, We have
\begin{equation}\label{eq:InformationSet}
\mathcal{I}_a(t) = \{x_0, y_a(\tau), \tau\leq t\},\ 
\mathcal{I}_d(t) = \{x_0, y_d(\tau), \tau\leq t\}.
\end{equation}
Each player considers stationary feedback strategies $\gamma_a$ and $\gamma_d$ such that $u_a(t) = \gamma_a(\mathcal{I}_a(t))$ and $u_d(t) = \gamma_d(\mathcal{I}_d(t))$. We consider the objective function of the following form
\begin{equation}\label{Eq:CostFunctional}
\begin{aligned}
J(&\gamma_a,\gamma_d;x_0)\\ 
 =& \mathbb{E}\bigg[ \int_0^{t_f} \Big(u_a(t)'u_a(t) - u_d(t)'u_d(t) + \omega_a^I \Vert x_a(t) - x_s(t) \Vert^2_2\\
 &- \omega_d^I \Vert x_d(t) - x_a(t) \Vert_2^2  \Big)dt + \omega_a\Vert x_a(t_f) - x_s(t_f) \Vert_2^2\\
 &- \omega_d \Vert x_d(t_f)  - x_a(t_f) \Vert_2^2 + O N_a - O N_d\bigg \vert\ x(0) =x_0\bigg],\\
\end{aligned}
\end{equation}
where $\omega^I_a$, $\omega^I_d$, $\omega_a$, and $\omega_d$ are the weighting coefficients that captures the trade-off between the attacker-to-defender distance and the attacker-to-asset distance. The superscript $I$ indicates that the weight is for the intermediate cost rather than the terminal cost. In some DADG papers, $w_a^I$ and $w_d^I$ are set to be zero and hence attention is paid to the terminal state alone\cite{garcia2020two}. The scalar $O\geq 0$ is the cost of making observations. The attacker tries to minimize its distance to the asset while trying to avoid being intercepted by the defender. Hence, the attacker aims to minimize the objective function and the defender, however, aims to maximize it. Even though we consider the state of all players lying in the same space $\mathbb{R}^n$, the results in this paper can be easily extended to a general setting.  It is tacitly assumed that the system characteristics are know to both players.

This formulation gives us a differential game with asymmetric yet controlled information. The problem formulation brings up a series of questions: When does the defender need to observe the attacker's state? For the defender, is it worth paying a cost to observe the attacker's state while the defender knows that the attacker is tracking the asset and the location of the asset is known to the defender? Are the optimal observation instances dependent on the trajectory of the asset?  In the next section, we develop our main results that address these questions.

\section{Theoretical Results}\label{Sec:TheoreticalResults}

In this section, we develop our main theoretical results of this paper. A close look at (\ref{Eq:CostFunctional}) gives the following form
\begin{equation}\label{Eq:CostFunctionalCompact}
\begin{aligned}
J(&\gamma_a,\gamma_d;x_0)\\ 
 =& \mathbb{E}\bigg[ \int_0^{t_f} \Big(u_a(t)'u_a(t) - u_d(t)'u_d(t) + \Vert x(t)\Vert_{Q}^2  \Big)dt\\
 &+ \Vert x(t_f)\Vert_{Q_{f}}^2 + O N_a - O N_d\bigg \vert\ x(0) =x_0\bigg],\\
\end{aligned}
\end{equation}
where $Q = \tilde{Q}(\omega^I_a,\omega^I_d)$, $Q_{f} = \tilde{Q}(\omega_a,\omega_d)$ with
$$
\tilde{Q}(\omega^I_a,\omega^I_d) = \begin{bmatrix}
(\omega_a^I - \omega_d^I) \Id_n & \omega^I_d \Id_n  & -\omega^I_a \Id_n\\
\omega_d^I \Id_n & -\omega^I_d \Id_n &  \mathbf{0}\\
-\omega^I_a \Id_n & \mathbf{0} & \omega_a^I\Id_n
\end{bmatrix}.
$$
\subsection{The Nash Control Strategies}
Applying It\^{o}'s lemma and a completion of squares on (\ref{Eq:CostFunctionalCompact}) yields the following lemma.
\begin{lemma}
    The cost functional $J$ in (\ref{Eq:CostFunctionalCompact}) with state dynamics (\ref{eq:AggregatedStateDynamics}) has the following form
    \begin{equation}\label{eq:CostFunctionalSquared}
    \begin{aligned}
        J = &\mathbb{E}\bigg[ \int_{0}^{t_f} \Vert u_a(t) + {B}_a' K(t)x(t) \Vert^2_2 - \Vert {u_d - {B}_d' K(t)x(t)} \Vert_2^2dt\\ 
            & + \Vert x_0\Vert^2_{K(0)} + \int_0^{t_f}\Tr\left( K(t) CC'\right)dt + ON_a - O N_d \bigg],\\
    \end{aligned}
    \end{equation}
    where $(K(t),t\in[0,t_f])$ is symmetric and satisfies the Riccatic equation
    \begin{equation}\label{eq:RiccatiEquation}
        \dot{K}(t) =- K(t)A - A'K(t) -Q - K(t)\left(B_dB_d' - B_a B_a' \right)K(t)
    \end{equation}
    with $K(t_f) = Q_{f}$.
\end{lemma}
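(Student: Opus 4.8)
The plan is to apply It\^o's lemma to the quadratic form $V(t)=x(t)'K(t)x(t)$ for a to‑be‑determined symmetric, differentiable matrix‑valued function $K(\cdot)$ on $[0,t_f]$, and then to choose $K$ so that the resulting identity converts the cost $J$ in (\ref{Eq:CostFunctionalCompact}) into the completed‑square form (\ref{eq:CostFunctionalSquared}).

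First I would compute, using the aggregate dynamics (\ref{eq:AggregatedStateDynamics}),
$$
d\big(x'Kx\big)=\Big[x'\big(\dot K+A'K+KA\big)x+2x'K\big(B_a u_a+B_d u_d\big)+\Tr\big(KCC'\big)\Big]dt+2x'KC\,dw,
$$
where the $\Tr(KCC')$ term is the It\^o correction generated by the diffusion $C\,dw$. Integrating over $[0,t_f]$, taking expectations (the stochastic integral has zero mean under the usual square‑integrability conditions on admissible trajectories), and using $K(t_f)=Q_f$ gives an expression for $\mathbb{E}[\Vert x(t_f)\Vert^2_{Q_f}]$ in terms of $\Vert x_0\Vert^2_{K(0)}$ and a time integral. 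Substituting this into (\ref{Eq:CostFunctionalCompact}) replaces the terminal penalty by $\Vert x_0\Vert^2_{K(0)}$ plus integral terms; the observation costs $ON_a-ON_d$ are unaffected, being deterministic constants once $\mathcal{T}_a,\mathcal{T}_d$ are fixed.

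The next step is completion of squares inside the running integrand. Expanding
$$
\Vert u_a+B_a'Kx\Vert_2^2-\Vert u_d-B_d'Kx\Vert_2^2=u_a'u_a-u_d'u_d+2x'K\big(B_a u_a+B_d u_d\big)+x'K\big(B_aB_a'-B_dB_d'\big)Kx,
$$
I can replace $u_a'u_a-u_d'u_d+2x'K(B_a u_a+B_d u_d)$ by the difference of the two squared norms minus $x'K(B_aB_a'-B_dB_d')Kx$. Collecting all the purely quadratic‑in‑$x$ terms then leaves the coefficient $\dot K+A'K+KA+Q-K(B_aB_a'-B_dB_d')K$; requiring it to vanish is precisely the Riccati equation (\ref{eq:RiccatiEquation}). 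With that choice of $K$ the quadratic‑in‑$x$ terms cancel, and only the two squared‑norm terms, $\Vert x_0\Vert^2_{K(0)}$, $\int_0^{t_f}\Tr(KCC')\,dt$ (equal to $\int_0^{t_f}\Tr(C'KC)\,dt$ by cyclicity of the trace), and $ON_a-ON_d$ survive, which is (\ref{eq:CostFunctionalSquared}). Symmetry of $K$ propagates automatically: $Q$, $Q_f$ and $B_aB_a'-B_dB_d'$ are symmetric, so the right‑hand side of (\ref{eq:RiccatiEquation}) is symmetric whenever $K$ is, and the terminal datum $Q_f$ is symmetric.

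The main obstacle is not the algebra but the standing assumption that (\ref{eq:RiccatiEquation}) admits a solution on the whole interval $[0,t_f]$: since the quadratic term $B_dB_d'-B_aB_a'$ is indefinite (this is a zero‑sum game rather than a control problem), the solution can exhibit finite escape time, so one must either assume $t_f$ is small enough or impose a condition ensuring global existence; I would flag this explicitly as a hypothesis of the lemma. A secondary technical point is verifying $\mathbb{E}\int_0^{t_f}2x'KC\,dw=0$ and finiteness of all expectations above, which holds once admissible strategies are restricted to those producing square‑integrable state trajectories, standard in this LQG setting.
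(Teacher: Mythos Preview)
Your proposal is correct and follows exactly the approach the paper invokes: the paper does not spell out a proof but simply states that it ``follows standard arguments of the `completion of squares' procedures'' and cites three references, which is precisely the It\^o--to--$x'Kx$ plus completion--of--squares computation you wrote out. Your additional remarks on possible finite escape time of (\ref{eq:RiccatiEquation}) and on the vanishing expectation of the stochastic integral are appropriate technical caveats that the paper also defers (it discusses existence of bounded solutions to the Riccati equation separately, after the lemma).
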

The proof follows standard arguments of the ``completion of squares'' procedures. Readers are referred to \cite[Theorem 1]{chen2019covariance}, \cite[Theorem 3.1]{maity2016strategies}, or \cite[Lemma 1]{huang2021pursuit} for specifics. The existence of bounded solutions for (\ref{eq:RiccatiEquation}) depends on $Q$, $Q_{f}$, and $B_dB_d' - B_a B_a'$, which we shall discuss later. Note that the attacker aims to find an observation-dependent strategy $\gamma_a$ to minimize (\ref{eq:CostFunctionalSquared}) while the defender desires to maximizes (\ref{eq:CostFunctionalSquared}). From $(\ref{eq:CostFunctionalSquared})$, we know that the choice of strategies only affect the terms within the first integral. Therefore, given $\mathcal{T}_a$ and $ \mathcal{T}_d$, the Nash control strategies will be of the form
\begin{equation}\label{eq:NashControlForm}
u_a^*(t) =  -B_a'K(t)\hat{x}_1(t),\ \ \ 
u_d^*(t) = B_d' K(t)\hat{x}_2(t),
\end{equation}
for some $\hat{x}_1$ and $\hat{x}_2$. The choices $\hat{x}_1$ and $\hat{x}_2$ are made by the attacker and the defender respectively such that $u^*_a(t)$ is $\mathcal{I}_a(t)$ measurable and $u^*_d(t)$ is $\mathcal{I}_d(t)$ measurable. To decompose the Nash control strategies, we split the $3n\times 3n$ matrix in (\ref{eq:NashControlForm}) as 
$$
K(t) = \begin{bmatrix}
K_{11}(t) & K_{12}(t)\\
K_{12}(t)' & K_{22}(t)\\
\end{bmatrix},
$$
where $K_{11}$ is an $2n\times 2n$ matrix function of time $t$. Similarly, $Q$ and $Q_{f}$ are partitioned into $Q_{ij}$ and $Q_{{f},ij}$ for $i,j\in\{1,2\}$. To decompose the Nash control strategy, we derive the following decomposed Riccati equations
\begin{align}
\begin{split}
\dot{K}_{11} = -K_{11}\hat{A} - &\hat{A}'K_{11}  - Q_{11} - K_{11}\left(\hat{B}_d \hat{B}_d' - \hat{B}_a \hat{B}_a'\right)K_{11},\\
&\textrm{with }K_{11}(t_f) = Q_{f,11},\\
\end{split} \label{eq:SubRiccatiEquation} \\
\begin{split}
\dot{K}_{12} = -K_{12} A_s - &\hat{A}'K_{12} - Q_{12} - K_{12}\left(\hat{B}_d \hat{B}_d' - \hat{B}_a \hat{B}_a'\right)K_{11},\\
&\textrm{with }K_{12}(t_f) = Q_{f,12},\\
\end{split}\nonumber
\end{align}
where 
\begin{equation}\label{eq:TruncatedSystemPara}
\hat{A} = \begin{bmatrix}
A_a & \mathbf{0}\\
\mathbf{0} & A_d\\
\end{bmatrix},\ \ \hat{B}_a = \begin{bmatrix}
{\tilde{B}_a}\\ \mathbf{0}
\end{bmatrix},\ \ {\hat{B}_d = \begin{bmatrix}
\mathbf{0} \\ \tilde{B}_d
\end{bmatrix}.}
\end{equation}
\begin{theorem}\label{theorem:NashControlStrategies}
Suppose that $\mathcal{T}_a$ and $\mathcal{T}_d$ are known and the trajectory $(x_s(\tau),\tau\in[0,t_f])$ is given.  The DA game defined by (\ref{eq:AggregatedStateDynamics}) and (\ref{eq:CostFunctionalSquared}) admits a Nash control strategy 
\begin{align}
\begin{split}
{u}^*_a &= -\hat{B}_a' K_{11} \begin{bmatrix}
{x_a}\\ \hat{x}_{1,d} 
\end{bmatrix}
-\hat{B}_a' s,\\
\end{split} \label{eq:AttackerControl} \\
\begin{split}
u_d^* &= \hat{B}_d K_{11}\begin{bmatrix}
\hat{x}_{2,a}\\ x_d
\end{bmatrix} + \hat{B}_d' s,\\
\end{split}\label{eq:DefenderControl}
\end{align}
where $(K_{11},t\in [0,t_f])$ is the solution of the Riccati equation (\ref{eq:SubRiccatiEquation}), $\hat{A}$, $\hat{B}_a$, and $\hat{B}_d$ are defined in (\ref{eq:TruncatedSystemPara}), and $(s(t),t\in[0,t_f])$ is generated by
\begin{equation}
    \dot{s}=\left[-\hat{A}' - K_{11}\left(\hat{B}_d \hat{B}_d' - \hat{B}_a \hat{B}_a' \right) \right]s - Q_{12}x_s,
\end{equation}
with $s(t_f) = Q_{f,12}x_s(t_f)$. Moreover, the estimate $\hat{x}_{1,d}$ of defender's state evolves as
\begin{equation}\label{eq:EstimateDynamicsAttacker}
\dot{\hat{x}}_{1,d}= A_d \hat{x}_{1,d} + \tilde{B}_d\hat{B}_d'\left(K_{11}\begin{bmatrix} x_a\\ \hat{x}_{1,d} \end{bmatrix} + s \right)
\end{equation}
with $\hat{x}_{1,d}(0)= x_{d0}$ and $\hat{x}_{1,d}(t) = x_d(t)$ for every $t\in\mathcal{T}_a$, and 
\begin{equation}\label{eq:EstimateDynamicsDefender}
   \dot{\hat{x}}_{2,a} = A_a \dot{\hat{x}}_{2,a} - \tilde{B}_a \hat{B}_a'\left( K_{11} \begin{bmatrix}
\hat{x}_{2,a}\\
x_d
\end{bmatrix} + s\right),
\end{equation}
with $\hat{x}_{2,a}(0)= x_{a0}$ and $\hat{x}_{2,a}(t) = x_a(t)$ for all $t\in \mathcal{T}_d$.
\end{theorem}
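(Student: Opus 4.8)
The plan is to work from the completed-square cost~(\ref{eq:CostFunctionalSquared}) of Lemma~1. Once $\mathcal{T}_a$ and $\mathcal{T}_d$ are fixed, the terms $\Vert x_0\Vert^2_{K(0)}$, $\int_0^{t_f}\Tr(KCC')\,dt$ and $ON_a-ON_d$ are constants, so the attacker and the defender interact only through the integrand $\Vert u_a+B_a'Kx\Vert_2^2-\Vert u_d-B_d'Kx\Vert_2^2$. The first step is to peel off the asset. Inserting the block partition of $K$ (and the induced partitions of $Q$ and $Q_f$) into the Riccati equation~(\ref{eq:RiccatiEquation}) and using that $A$ is block diagonal while $B_aB_a'$ and $B_dB_d'$ vanish outside the $(x_a,x_d)$ block, one reads off that the $(1,1)$ and $(1,2)$ blocks obey the decoupled Riccati equation~(\ref{eq:SubRiccatiEquation}) for $K_{11}$ and a companion linear equation for $K_{12}$; in particular $K_{11}$ never sees $x_s$. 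Setting $s(t):=K_{12}(t)x_s(t)$ and differentiating, the identity $\dot x_s=A_sx_s$ cancels the $K_{12}A_s$ term and leaves exactly the stated linear ODE for $s$ with $s(t_f)=Q_{f,12}x_s(t_f)$. Substituting back, $B_a'Kx=\hat B_a'(K_{11}\begin{bmatrix}x_a\\x_d\end{bmatrix}+s)$ and $B_d'Kx=\hat B_d'(K_{11}\begin{bmatrix}x_a\\x_d\end{bmatrix}+s)$, so the asset enters only through the feedforward $s$.

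The second step is a certainty-equivalence (separation) argument applied to each player. Fixing the defender's candidate strategy, the attacker faces a partially observed LQ problem whose information $\mathcal{I}_a(t)$ contains $x_a$ and $x_s$ exactly but only a filtered version of $x_d$. Conditioning the integrand on $\mathcal{I}_a(t)$ and writing $x_d=\hat x_{1,d}+(x_d-\hat x_{1,d})$ with $\hat x_{1,d}:=\mathbb{E}[x_d\mid\mathcal{I}_a(t)]$, the cross terms drop because $\mathbb{E}[x_d-\hat x_{1,d}\mid\mathcal{I}_a(t)]=0$, and (modulo the subtlety discussed below) the only $u_a$-dependence left is through $\Vert u_a+\hat B_a'(K_{11}\begin{bmatrix}x_a\\ \hat x_{1,d}\end{bmatrix}+s)\Vert_2^2$. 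Minimizing pointwise over the $\mathcal{I}_a(t)$-measurable $u_a$ gives~(\ref{eq:AttackerControl}); the symmetric computation from the defender's side, conditioning on $\mathcal{I}_d(t)$ and introducing $\hat x_{2,a}:=\mathbb{E}[x_a\mid\mathcal{I}_d(t)]$, gives~(\ref{eq:DefenderControl}). Both controls are implementable from the respective player's information, and since each is a best response to the other, the pair is a Nash control strategy.

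The third step identifies the filters. On an interval containing no observation of $x_d$, the attacker's only fresh data is $x_a(\cdot)$, whose increment is $\mathcal{I}_a$-predictable plus the innovation $C_a\,dw_a$; since $w_a\perp w_d$, this innovation produces no correction to $\hat x_{1,d}$, so $\hat x_{1,d}$ propagates by the conditional-mean dynamics of $x_d$ under the equilibrium defender control, namely~(\ref{eq:EstimateDynamicsAttacker}), and at each $t\in\mathcal{T}_a$ it is reset to the observed $x_d(t)$. The symmetric reasoning yields~(\ref{eq:EstimateDynamicsDefender}) with resets on $\mathcal{T}_d$.

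I expect the main obstacle to be making the separation step rigorous under two-sided private information. The drift of $u_d^*$ contains the defender's own estimate $\hat x_{2,a}$, so passing to $\mathbb{E}[u_d^*\mid\mathcal{I}_a]$ requires $\mathbb{E}[\hat x_{2,a}\mid\mathcal{I}_a]=x_a$, i.e.\ the nested ``estimate of the opponent's estimate'' must collapse, and one must also show that the residual conditional-covariance terms left after the completion of squares are independent of the deviating player's control, so that a unilateral deviation cannot exploit the $-\Vert\cdot\Vert_2^2$ sign against the player who pays it. Establishing both relies on the fact that each player's input enters only its own state block, that the two process noises are independent with the ``own'' noise adapted to the ``own'' information, and on strict convexity (resp.\ concavity) of the attacker's (resp.\ defender's) problem; it also presupposes the standing requirement flagged just after~(\ref{eq:RiccatiEquation}) that~(\ref{eq:SubRiccatiEquation}) admit a bounded solution on $[0,t_f]$, without which the completions of squares are not valid.
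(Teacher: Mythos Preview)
Your proposal is correct and follows essentially the same route as the paper's own proof: start from the completed-square cost~(\ref{eq:CostFunctionalSquared}), use the block structure of the Riccati equation to peel off the asset via $s(t)=K_{12}(t)x_s(t)$, invoke a certainty-equivalence argument to obtain $u_a^*,u_d^*$ in terms of the conditional means $\hat{x}_{1,d}=\mathbb{E}[x_d\mid\mathcal{I}_a]$ and $\hat{x}_{2,a}=\mathbb{E}[x_a\mid\mathcal{I}_d]$, and then propagate those conditional means between observation instants. The only organizational difference is that the paper outsources the delicate saddle-point verification to an external reference (Maity et~al., 2017), quoting as a black box the sufficient conditions $\mathbb{E}[x-\hat{x}_1\mid\mathcal{I}_a]=\mathbb{E}[x-\hat{x}_2\mid\mathcal{I}_a]=\mathbb{E}[x-\hat{x}_1\mid\mathcal{I}_d]=\mathbb{E}[x-\hat{x}_2\mid\mathcal{I}_d]=0$, whereas you attempt to sketch that verification directly and correctly isolate its crux: the ``estimate of the opponent's estimate'' collapse $\mathbb{E}[\hat{x}_{2,a}\mid\mathcal{I}_a]=x_a$ (and its dual), together with the fact that the residual covariance in the $-\Vert\cdot\Vert_2^2$ term cannot be manipulated by a unilateral deviation. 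The paper uses exactly that collapse when deriving the filter dynamics, so your diagnosis of where the real work lies is on target.
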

\begin{proof}

In the proof, we drop the time index $t$ in some places for convenience. From \cite[Proof of Theorem 2]{maity2017linear}, we know that consider the game defined by (\ref{Eq:CostFunctionalCompact}) and (\ref{eq:AggregatedStateDynamics}), (\ref{eq:NashControlForm}) constitutes a Nash control strategy if
{\begin{equation}\label{eq:NecessaryConEstimate}
\begin{aligned}
&\mathbb{E}\left[x(t) - \hat{x}_1(t)\middle\vert \mathcal{I}_a(t) \right] = 0,\ \mathbb{E}\left[ x(t) -\hat{x}_2(t) \middle\vert \mathcal{I}_a(t) \right] = 0,\ \textrm{and}\\
&\mathbb{E}\left[ x(t) - \hat{x}_2(t)  \middle\vert \mathcal{I}_d(t)  \right] = 0,\ \mathbb{E}\left[x(t) - \hat{x}_1(t)| \mathcal{I}_d(t)\right] =0.
\end{aligned}
\end{equation}}
With this result, if we let $\hat{x}_1 = [\hat{x}_{1,a}'\ \hat{x}_{1,d}'\ \hat{x}_{1,s}']'$ and $\hat{x}_2 = [\hat{x}_{2,a}'\ \hat{x}_{2,d}'\ \hat{x}_{2,s}']'$, we have $\hat{x}_{1,a} = x_a$, $\hat{x}_{1,d} = \mathbb{E}\left[x_d|\mathcal{I}_a\right]$, $\hat{x}_{1,s}=x_s$ and $\hat{x}_{2,a} =\mathbb{E}[ x_a|\mathcal{I}_d]$  $\hat{x}_{2,d} = x_d$, $\hat{x}_{2,s} = x_s$.

Using (\ref{eq:SubRiccatiEquation}), the Nash control strategy of the attacker can be decomposed as
$$
{u}^*_a = -\hat{B}_a' K_{11} \begin{bmatrix}
{x_a}\\ \hat{x}_{1,d} 
\end{bmatrix}
-\hat{B}_a' K_{12} x_s,
$$
where the attacker's control is driven by his/her state, his/her estimate of the defender's state, as well as the trajectory of the asset. To eliminate the dependence of the control on $A_s$ (which is introduced for auxiliary purpose), we let $s(t) = K_{12}(t)x_s(t)$.  Note that 
$$
\begin{aligned}
\dot{s} &= \dot{K}_{12}x_s + K_{12} \dot{x}_s\\
&= \left[-\hat{A}' - K_{11}\left(\hat{B}_d \hat{B}_d' - \hat{B}_a \hat{B}_a' \right) \right]s - Q_{12}x_s,
\end{aligned}
$$
with $s(t_f) = Q_{f,12} x_{s}(t_f)$. Hence, the control  depends on the trajectory of the asset irrespective of $A_s$. 
Under the Nash control strategy, the dynamics of the attacker is
$$
\dot{x}_a = A_a x_a(t) - \tilde{B}_a\hat{B}_a' K_{11} \begin{bmatrix}
x_a'\\ \hat{x}_{1,d} 
\end{bmatrix} - \tilde{B}_a\hat{B}_a's + C_a dw_a(t),
$$
with $x_a(0) = x_{a0}$.  From (\ref{eq:NecessaryConEstimate}), and using the fact $\mathbb{E}[w_a(t)|\mathcal{I}_a(t)]=0$, we have
$$
\begin{aligned}
\dot{\hat{x}}_{2,a} &= A_a \dot{\hat{x}}_{2,a} - \tilde{B}_a \hat{B}_a'\left(K_{11} \begin{bmatrix}
\hat{x}_{2,a}\\
\mathbb{E}[\hat{x}_{1,d}|\mathcal{I}_d] 
\end{bmatrix} + s\right)\\
&= A_a \dot{\hat{x}}_{2,a} - \tilde{B}_a \hat{B}_a'\left( K_{11} \begin{bmatrix}
\hat{x}_{2,a}\\
x_d
\end{bmatrix} + s\right)
\end{aligned}
$$
with $\hat{x}_{2,a}(0)= x_{a0}$ and $\hat{x}_{2,a}(t) = x_a(t)$ for every $t\in \mathcal{T}_d$.
Hence, we fully characterizes the Nash control strategy of the defender (\ref{eq:DefenderControl}).
Similarly, we can obtain the attacker's estimate of the defender's state, which is given by (\ref{eq:EstimateDynamicsDefender}).
\end{proof}
\begin{remark}
The attacker's estimate of the defender's state $\hat{x}_{1,d}$ evolves according to (\ref{eq:EstimateDynamicsAttacker}). The estimate $\hat{x}_{1,d}$ does not require the attacker to know the control of the defender. Every time the attacker choose to observe, he/she receives the actual state of the defender, i.e., $\hat{x}_{1,d}(t)=x_d(t),\forall t\in\mathcal{T}_a$. The solution of the Riccati equation (\ref{eq:SubRiccatiEquation}) may admits a finite escape time since the conditions that $Q_{11}$ is positive semi-definite and $\hat{B}_d \hat{B}_d' - \hat{B}_a \hat{B}_a'$ is positive-definite do not hold. We can use a more lenient condition given by \cite[Corollary 5.13]{engwerda2005lq} to check the existence of a bounded solution of (\ref{eq:SubRiccatiEquation}). Due to the space constraints, we do not restate the corollary here. Instead, we provide a closed-form bounded solution of the Riccati equation for our case study in Sec. \ref{sec:CaseStudy}.
\end{remark}

\subsection{The Nash Observation Choices}
In Theorem \ref{theorem:NashControlStrategies}, we characterize the Nash control strategies of both players when the observation instances are given. To understand how both players would select their observation instances, we need to obtain the cost functional under the Nash control strategies for any given $\mathcal{T}_a$ and $\mathcal{T}_b$. From the decomposition of the Riccati equation in (\ref{eq:SubRiccatiEquation}), we know
$$
B_{a}' K x = \hat{B}_a' K_{11}\begin{bmatrix}
x_a \\ x_d
\end{bmatrix} + {\hat{B}_a' s.}
$$
We further decompose $K_{11}$ into
$$
K_{11} = \begin{bmatrix}
K_{11}^{ul} & K_{11}^{ur}\\
{K_{11}^{ur}}' & K_{11}^{br}
\end{bmatrix},
$$
where $K_{11}^{ul}$, $K_{11}^{ur}$, and $K_{11}^{br}$ are $n\times n $ matrices.
Then, the first term in (\ref{eq:CostFunctionalSquared}) can be written as
$$
\begin{aligned}
\left\Vert u^*_a(t) + {B}_a' K(t)x(t) \right\Vert^2_2 &= \left\Vert \hat{B}_a' K_{11} \begin{bmatrix} 0\\ x_d -\hat{x}_{1,d}\end{bmatrix} \right\Vert_2^2\\
&=\Vert \tilde{B}_a' K_{11}^{ur}(x_d - \hat{x}_{1,d})\Vert_2^2.
\end{aligned}
$$
Similarly, we obtain 
$$
{\left\Vert u^*_d(t) - B_d' K(t) x(t)  \right\Vert_2^2} = \Vert \tilde{B}_d' {K_{11}^{ur}}'(x_a - \hat{x}_{2,a})  \Vert_2^2. 
$$
From (\ref{eq:SeparateDynamics}) and (\ref{eq:EstimateDynamicsAttacker}), we know that 
\begin{equation}\label{eq:EstimationError}
\begin{aligned}
&\mathbb{E}[(x_d(t) -\hat{x}_{1,d}(t))(x_d(t) - \hat{x}_{1,d}(t))'] \\
=& {\int_{\tilde{t}}^t e^{A_d(s-\tilde{t})} C_d C_d'e^{A_d(s-\tilde{t})'} ds,}
\end{aligned}
\end{equation}
where $\tilde{t}$ the latest observation before $t$, which is dependent on $t$ and $\mathcal{T}_a $ and is defined as $\tilde{t} = \max \{\tau\ |\ \tau\in\mathcal{T}_a, \tau \leq t \}$. The discussion above leads to the following corollary.
\begin{corollary}\label{corol:CostUnderNashControl}
For given $\mathcal{T}_a$ and $\mathcal{T}_b$, let $\mathcal{T}_a =\{t_{1,a}.t_{2,a},\cdots,t_{N_a,a}\}$ and $\mathcal{T}_d = \{t_{1,d},t_{2,d},\cdots,t_{N_d,d}\}$ with $t_{1,a}<t_{2,a}<\cdots < t_{N_a,a}$ and $t_{1,d}<  t_{2,d} <\cdots < t_{N_d,d}$. Under the Nash control strategies obtained in Theorem \ref{theorem:NashControlStrategies}, the cost functional (\ref{eq:CostFunctionalSquared}) becomes
\begin{equation}\label{eq:CostUnderNashControl}
\begin{aligned}
&J(\gamma^*_a, \gamma^*_d,x_0)\\ 
=& \sum_{i=0}^{N_a} \int_{t_{i,a}}^{t_{i+1,a}} \Tr\left[ \Sigma_{1,d}(t-t_{i,a}) \varphi_a(t) \right]dt\\
&\ -\sum_{i=0}^{N_d} \int_{t_{i,d}}^{t_{i+1,d}} \Tr\left[ \Sigma_{2,a}(t-t_{i,d}) \varphi_d(t)\right] dt\\
&\ + \int_0^{t_f}\Tr\left( K(t) CC'\right)dt  + \Vert x_0\Vert^2_{K(0)} +  ON_a - O N_d ,\\
\end{aligned}
\end{equation}
where $\sum_{1,d}(t) = \int_{0}^t e^{A_d \tau}C_d C_d'e^{A_d \tau}d\tau$, $\Sigma_{2,a}= \int_{0}^t e^{A_a \tau}C_a C_a'e^{A_a \tau}d\tau$, $t_{i,a} \in \mathcal{T}_a$, $t_{0,a} = t_{0,d} = 0$, and $t_{N_a+1,a} = t_{N_d+1,d} =t_f$. Moreover, $\varphi_a(t) = {K_{11}^{ur}}(t)' \tilde{B}_a \tilde{B}_a' K_{11}^{ur}(t)$ and $\varphi_d(t) =  {K_{11}^{ur}}(t) \tilde{B}_d \tilde{B}_d{'} {K_{11}^{ur}}(t)'$.
\end{corollary}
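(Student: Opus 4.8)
The plan is to start from the squared-form cost (\ref{eq:CostFunctionalSquared}), substitute the Nash control strategies of Theorem~\ref{theorem:NashControlStrategies}, and evaluate the resulting expectation term by term. The deterministic pieces $\Vert x_0\Vert^2_{K(0)}$, $\int_0^{t_f}\Tr(K(t)CC')dt$, and $ON_a - ON_d$ do not involve the controls and are constants once $x_0$, $\mathcal{T}_a$, $\mathcal{T}_d$ are fixed, so they pass through the expectation unchanged and appear verbatim in (\ref{eq:CostUnderNashControl}). The entire content of the proof therefore lies in rewriting $\mathbb{E}\int_0^{t_f}\big(\Vert u_a^* + B_a'Kx\Vert_2^2 - \Vert u_d^* - B_d'Kx\Vert_2^2\big)dt$.

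For that term I would invoke the two identities established immediately before the corollary, namely $\Vert u_a^*(t) + B_a'K(t)x(t)\Vert_2^2 = \Vert \tilde{B}_a' K_{11}^{ur}(t)\,(x_d(t) - \hat{x}_{1,d}(t))\Vert_2^2$ and $\Vert u_d^*(t) - B_d'K(t)x(t)\Vert_2^2 = \Vert \tilde{B}_d' {K_{11}^{ur}}(t)'\,(x_a(t) - \hat{x}_{2,a}(t))\Vert_2^2$. Writing each squared norm as a trace via $\Vert Mv\Vert_2^2 = v'M'Mv = \Tr(M'M\,vv')$, using linearity to interchange $\mathbb{E}$ with $\Tr$, and interchanging $\mathbb{E}$ with $\int_0^{t_f}dt$ by Fubini's theorem (legitimate since the integrand is bounded on the compact horizon whenever (\ref{eq:SubRiccatiEquation}) admits a bounded solution on $[0,t_f]$, cf. the Remark after Theorem~\ref{theorem:NashControlStrategies}), reduces the attacker's term to $\int_0^{t_f}\Tr\big(\varphi_a(t)\,\mathbb{E}[(x_d(t)-\hat{x}_{1,d}(t))(x_d(t)-\hat{x}_{1,d}(t))']\big)dt$ with $\varphi_a(t) = {K_{11}^{ur}}(t)'\tilde{B}_a\tilde{B}_a'K_{11}^{ur}(t)$, and likewise the defender's term to the analogous expression with $\varphi_d(t) = K_{11}^{ur}(t)\tilde{B}_d\tilde{B}_d'{K_{11}^{ur}}(t)'$ and error $x_a - \hat{x}_{2,a}$.

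Next I would plug in the error-covariance formula (\ref{eq:EstimationError}): the conditional error covariance equals $\int_{\tilde t}^t e^{A_d(s-\tilde t)}C_dC_d'e^{A_d(s-\tilde t)'}ds = \Sigma_{1,d}(t-\tilde t)$, where $\tilde t = \max\{\tau\in\mathcal{T}_a:\tau\le t\}$ is the most recent attacker observation before $t$, taking $\tilde t = 0$ before the first observation (consistent with $x_0$ known and $\hat{x}_{1,d}(0) = x_{d0}$). The structural point making the statement clean is that $\Sigma_{1,d}$ depends only on the elapsed time $t-\tilde t$, because $A_d$ and $C_d$ are time-invariant. Since $\tilde t$ is piecewise constant, equal to $t_{i,a}$ on $(t_{i,a}, t_{i+1,a}]$ with the conventions $t_{0,a}=0$ and $t_{N_a+1,a}=t_f$, I would split $\int_0^{t_f} = \sum_{i=0}^{N_a}\int_{t_{i,a}}^{t_{i+1,a}}$, replace $\tilde t$ by $t_{i,a}$ on each subinterval, and permute the trace cyclically to obtain $\sum_{i=0}^{N_a}\int_{t_{i,a}}^{t_{i+1,a}}\Tr[\Sigma_{1,d}(t-t_{i,a})\varphi_a(t)]dt$. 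The same argument applied to the defender's (negated) term, with $\mathcal{T}_d$, $\Sigma_{2,a}$, and $\varphi_d$, yields the second sum, and assembling everything reproduces (\ref{eq:CostUnderNashControl}).

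The step I expect to require the most care is the bookkeeping around the estimation error rather than any single computation: one must be sure that between consecutive observations in $\mathcal{T}_a$ the attacker's estimate $\hat{x}_{1,d}$ tracks $x_d$ with an error whose covariance is exactly the Wiener-driven expression in (\ref{eq:EstimationError}), that each observation in $\mathcal{T}_a$ resets this error to zero so that $\tilde t$ is precisely the piecewise-constant function used in the splitting, and that the index conventions $t_{0,a}=t_{0,d}=0$ and $t_{N_a+1,a}=t_{N_d+1,d}=t_f$ align the sums with the horizon $[0,t_f]$. Once those points are stated cleanly, the remainder is routine trace algebra together with Fubini's theorem.
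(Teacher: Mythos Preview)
Your proposal is correct and follows essentially the same route as the paper: the paper's ``proof'' is exactly the discussion preceding the corollary, which derives the identities $\Vert u_a^* + B_a'Kx\Vert_2^2 = \Vert \tilde{B}_a'K_{11}^{ur}(x_d-\hat{x}_{1,d})\Vert_2^2$ and the error-covariance expression (\ref{eq:EstimationError}), and then simply states that assembling these into (\ref{eq:CostFunctionalSquared}) yields (\ref{eq:CostUnderNashControl}). If anything, you are more explicit than the paper about the trace/Fubini bookkeeping and the piecewise splitting over $\mathcal{T}_a$ and $\mathcal{T}_d$.
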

Corollary \ref{corol:CostUnderNashControl} presents the cost functional under the Nash control strategies. Among the six terms in (\ref{eq:CostUnderNashControl}), only the first and the last two terms are associated with $\mathcal{T}_a$ and $\mathcal{T}_d$. Note that the objective of the attacker is to find a set of observation instances $\mathcal{T}_a$ that minimizes $J$, while the defender aims to maximize $J$.  Hence, to decide their observation instances, the attacker and the defender only have to consider the first two terms and the last two terms. Moreover, the effect of the two players' observation can be decoupled, by which we mean 
$$
\tilde{J}(\mathcal{T}_a,\mathcal{T}_b) = \tilde{J}_a(\mathcal{T}_a) - \tilde{J_b}(\mathcal{T}_b),
$$
where
{\small
\begin{equation}\label{eq:SeparatedObservationCost}
\begin{aligned}
 \tilde{J}_a(\mathcal{T}_a)&=  \sum_{i=0}^{N_a} \int_{t_{i,a}}^{t_{i+1,a}} \Tr\left[ \Sigma_{1,d}(t-t_{i,a}) \varphi_a(t)\right]dt + O N_a,\\
 \tilde{J}_d(\mathcal{T}_d)&=\sum_{i=0}^{N_d} \int_{t_{i,d}}^{t_{i+1,d}} \Tr\left[ \Sigma_{2,a}(t-t_{i,d}) \varphi_d(t)\right] dt + O N_d.
\end{aligned}
\end{equation}
}
\begin{remark}
Corollary \ref{corol:CostUnderNashControl} shows that the optimal choices of observation instances do not depend on the trajectory of the asset, by which we mean no matter how the asset moves, the defender and the attacker's choices of observation instances will not be affected. This is due to the fact that both players know the trajectory of the asset. The relative position between the asset and the attacker can be estimated unbiasedly by the defender. The cost is captured by the variance of the estimate error which is independent of the asset's trajectory. 
\end{remark}
\begin{remark}
Since $\tilde{J}(\mathcal{T}_a,\mathcal{T}_d)$ can be decomposed into $\tilde{J}_a(\mathcal{T}_a) - \tilde{J}_d(\mathcal{T}_d)$, the Nash observation strategies that solve $\min_{\mathcal{T}_a}\max_{\mathcal{T}_d} \tilde{J}(\mathcal{T}_a,\mathcal{T}_d)$ can be obtained by solving $\min_{\mathcal{T}_a} \tilde{J}_a(\mathcal{T}_a)$ and $\min_{\mathcal{T}_b} \tilde{J}_{d}(\mathcal{T}_d)$. That means the defender and the attacker can make independent observation choices by solving two independent optimization problems. The independence comes from the fact that the defender and the attacker have independent dynamics in (\ref{eq:SeparateDynamics}).
\end{remark}
\begin{remark}
To provide insights on the cost functional, we take the attacker as an example. In the first term of $\tilde{J}_a(\mathcal{T}_a)$, we know
\begin{equation}\label{eq:TraceInsight}
\begin{aligned}
&\Tr\left[ \Sigma_{1,d}(t-\tilde{t}) \varphi_a(t)\right]\\
=& \mathbb{E}[(x_d(t)-\hat{x}_{1,d}(t))' {K_{11}^{ur}}' \tilde{B}_a \tilde{B}_a' K_{11}^{ur}(x(t)-\hat{x}_{1,d}(t))],
\end{aligned}
\end{equation}
where $\Sigma_{1,d}(t-\tilde{t})$ is the variance of the estimation error $x_d - \hat{x}_{1,d}$ at time and $\tilde{t}$ is the latest observation instance before time $t$. The term in (\ref{eq:TraceInsight}) captures the the instantaneous cost at time $t$ induced by the mismatch between the actual state of the defender and the attacker's estimate. The observation choices are control-aware by which we mean the estimation error is scaled by the matrix ${K_{11}^{ur}}' \tilde{B}_a \tilde{B}_a' K_{11}^{ur}$ and he matrix assign more weight to the estimation error corresponding to the states that are more informative to control needs.
From (\ref{eq:EstimationError}), we know that the estimation error accumulates until the attacker makes an observation. As a result, the observation clears the estimation error. However, each observation made is subject to a cost $O$. Hence, the attacker has to make observation decision strategically over time. Overall, the observation decision has to consider the trade-off between the estimation error and the number of observations. Moreover, the observation instances need to be well designed by both players to minimize the corresponding integral terms in (\ref{eq:SeparatedObservationCost}).
\end{remark}

Since solving the Nash observation game is equivalent to solving two independent optimization problems, we focus on solving the attacker's optimization problem. One can obtain the result for the defender similarly. The solution of $\min_{\mathcal{T}_a}\tilde{J}_a(\mathcal{T}_a)$ involves two components: the optimal number of observations $N_a^*$ and a set of optimal observation instances $\mathcal{T}^*_a=\{t^*_{1,a}, t^*_{2,a},\cdots, t^*_{N^*_a,a}\}$ (with a slight abuse of notation here). Define
\begin{equation}\label{eq:ObsrvationInstancesOp}
\begin{aligned}
f^*_a(N_a) \coloneqq &\min_{t_1.\cdots,t_{N_a}} f_a(t_1,t_2,\cdots,t_{N_a})\\
&\ \ \ s.t.\ \ \ t_0 = 0,\ t_{N_a} = t_f,\\
&\ \ \ \ \ \ \ t_i \leq t_{i+1}, i = 0,1,\cdots,N_a+1,
\end{aligned}
\end{equation}
with $f_a\coloneqq \sum_{i=0}^{N_a} \int_{t_{i}}^{t_{i+1}} \Tr\left[ \Sigma_{1,d}(t-t_{i}) \varphi_a(t)\right]dt$. From \cite[Proposition 8.5.12]{bernstein2009matrix}, we know that if $\Sigma_1 \geq \Sigma_2$, $\Tr[\Sigma_1 M] \geq \Tr[\Sigma_1M]$ for a positive semi-definite matrix $M$. Note that $\varphi_a(t)$ is positive semi-definite and $\Sigma_{1,d}(t) > \Sigma_{1,d}(t')$ for $t>t'$. Hence, $f_a^*(N_a)$ is a decreasing function of $N_a$, which aligns our intuition that the more observations received, the better the control would be. After solving ({\ref{eq:ObsrvationInstancesOp}}), it remains to find the optimal number of observations $N^*_a$ that minimizes $f_a(N_a) + O N_a$.  In the following theorem, we show that there always exists a minimizer for the optimization problem in (\ref{eq:ObsrvationInstancesOp}), which can be characterized by a set of necessary conditions.
\begin{theorem}\label{theorem:NecessaryCondition}
There always exists a solution, denoted by $N_a^*$ and $t^*_1,t^*_2,\cdots, t^*_{N^*_a}$, that solves $\min_{\mathcal{T}_a} \tilde{J}_a(\mathcal{T}_a)$. Furthermore, the optimal number of observations $N_a^*$ is bounded, i.e.,
\begin{equation}\label{eq:BoundedObservationNumber}
{N_a^* < \frac{1}{O} \int_{0}^{t_f} \Tr\left[ \Sigma_{1,d}(t) \varphi_a(t)\right]dt.}
\end{equation}
And the optimal observation instances $t_1^*,t_2^*,\cdots t^*_{N_a^*}$ need to satisfy
\begin{equation}\label{eq:FirstOrderNecessaryCondition}
\begin{aligned}
&\int_{t^*_{i-1}}^{t^*_{i}}\Tr\left[ e^{A_d(t^*_{i}-t)}C_dC_d'{e^{A_d(t^*_{i}-t)}}'\varphi_a(t^*_{i})\right]dt\\ 
=& \int_{t^*_{i}}^{t^*_{i+1}} \Tr\left[e^{A_d(t-t^*_{i})} C_dC_d' {e^{A_d(t-t^*_{i})}}'\varphi_a(t) \right]dt,\\
\end{aligned}
\end{equation}
for $i=1,2,\cdots, N_a^*$.
\end{theorem}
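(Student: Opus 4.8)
The plan is to treat $\min_{\mathcal{T}_a}\tilde{J}_a(\mathcal{T}_a)$ as a two-level problem: for a fixed number of observations $N_a$, minimize over the instances $t_1,\dots,t_{N_a}$ to obtain $f_a^*(N_a)$ as in (\ref{eq:ObsrvationInstancesOp}), and then minimize $f_a^*(N_a)+ON_a$ over $N_a$. For fixed $N_a$ the feasible set $\{(t_1,\dots,t_{N_a}):0=t_0\le t_1\le\cdots\le t_{N_a}\le t_{N_a+1}=t_f\}$ is a compact simplex, and the map $(t_1,\dots,t_{N_a})\mapsto f_a(t_1,\dots,t_{N_a})$ is continuous on it, since each summand $\int_{t_i}^{t_{i+1}}\Tr[\Sigma_{1,d}(t-t_i)\varphi_a(t)]\,dt$ depends continuously on its endpoints and on $t_i$, using that $t\mapsto\varphi_a(t)$ is continuous on $[0,t_f]$ whenever the Riccati equation (\ref{eq:SubRiccatiEquation}) has a bounded solution there. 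By the Weierstrass theorem the infimum defining $f_a^*(N_a)$ is then attained.

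Next I would bound any optimizer and use this to obtain existence. Since every integrand is the trace of a product of positive semi-definite matrices, $f_a(\cdot)\ge 0$, so any $\mathcal{T}_a$ with $|\mathcal{T}_a|=N_a$ satisfies $\tilde{J}_a(\mathcal{T}_a)\ge ON_a$, whereas $\tilde{J}_a(\emptyset)=C_0\coloneqq\int_0^{t_f}\Tr[\Sigma_{1,d}(t)\varphi_a(t)]\,dt$. Hence any policy with $ON_a\ge C_0$ is no better than $\emptyset$, so (as $O>0$) it suffices to minimize $f_a^*(N_a)+ON_a$ over the finite set of $N_a$ with $ON_a<C_0$ together with $N_a=0$; on this finite set each $f_a^*(N_a)$ is attained, so the minimum is attained and yields a global minimizer $(N_a^*,t_1^*,\dots,t_{N_a^*}^*)$. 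The strict bound (\ref{eq:BoundedObservationNumber}) then follows from $ON_a^*<f_a^*(N_a^*)+ON_a^*\le C_0$, where $f_a^*(N_a^*)>0$ for every finite $N_a^*$ because $\Sigma_{1,d}(s)\succ 0$ for $s>0$ and $\varphi_a\not\equiv 0$.

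To obtain (\ref{eq:FirstOrderNecessaryCondition}) I would first show the minimizer is interior. If $t_i^*=t_{i+1}^*$ for some $i$ (with $t_0^*=0$ and $t_{N_a^*+1}^*=t_f$), the interval $[t_i^*,t_{i+1}^*]$ contributes nothing, so that configuration is equivalent to one with $N_a^*-1$ observations and $f_a^*(N_a^*)\ge f_a^*(N_a^*-1)$; with monotonicity this forces equality, contradicting the strict decrease of $f_a^*(\cdot)$ noted before the theorem. Hence $0<t_1^*<\cdots<t_{N_a^*}^*<t_f$ and $\partial f_a/\partial t_i=0$ at the minimizer for each $i$. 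Since $t_i$ enters $f_a$ only through $\int_{t_{i-1}}^{t_i}\Tr[\Sigma_{1,d}(t-t_{i-1})\varphi_a(t)]\,dt$ and $\int_{t_i}^{t_{i+1}}\Tr[\Sigma_{1,d}(t-t_i)\varphi_a(t)]\,dt$, differentiating with the Leibniz rule, using $\Sigma_{1,d}(0)=0$ and $d\Sigma_{1,d}(s)/ds=e^{A_d s}C_dC_d'{e^{A_d s}}'$, gives
\[
\frac{\partial f_a}{\partial t_i}=\Tr[\Sigma_{1,d}(t_i-t_{i-1})\varphi_a(t_i)]-\int_{t_i}^{t_{i+1}}\Tr[e^{A_d(t-t_i)}C_dC_d'{e^{A_d(t-t_i)}}'\varphi_a(t)]\,dt.
\]
Rewriting $\Sigma_{1,d}(t_i-t_{i-1})=\int_{t_{i-1}}^{t_i}e^{A_d(t_i-t)}C_dC_d'{e^{A_d(t_i-t)}}'\,dt$ via the substitution $\tau=t_i-t$ and setting the derivative to zero reproduces (\ref{eq:FirstOrderNecessaryCondition}).

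The main obstacle is the boundary of the simplex constraint: a priori the optimal instances could collide with one another or with the endpoints $0,t_f$, in which case plain stationarity would not apply. This is precisely what the interiority argument above rules out, and it hinges on the strict monotonicity of $N_a\mapsto f_a^*(N_a)$, which in turn rests on the strict ordering $\Sigma_{1,d}(t)\succ\Sigma_{1,d}(t')$ for $t>t'$ (equivalently, controllability of $(A_d,C_d)$). The remaining ingredients --- differentiation under the integral sign and the change of variables --- are routine.
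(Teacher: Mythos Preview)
Your proof follows essentially the same route as the paper's: compactness plus continuity and Weierstrass for existence at each fixed $N_a$, the comparison $\tilde{J}_a(\mathcal{T}_a^*)\le \tilde{J}_a(\varnothing)$ together with $\tilde{J}_a(\mathcal{T}_a^*)>ON_a^*$ for the bound on $N_a^*$, and Leibniz differentiation of the two adjacent summands to obtain the stationarity condition (\ref{eq:FirstOrderNecessaryCondition}). Your additional interiority argument (ruling out coincident instances via the strict monotonicity of $f_a^*$) is a refinement the paper omits, and it correctly justifies why plain first-order stationarity, rather than a KKT condition with active constraints, applies at the optimum.
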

The proof is presented in Appendix \ref{proof:NecessaryCondition}. From Theorem \ref{theorem:NecessaryCondition}, we know that the optimal number of observations $N_a^*$ is bounded and inversely proportional to the observation cost $O$. From (\ref{eq:FirstOrderNecessaryCondition}), one can say that the optimal observation instances are spread out over the horizon $[0,t_f]$. Given a limited number of observations, it is unwise to allocate two observation instances in a short period of time. The effect of control is applied via $\varphi_a(t)$. For some period when $\varphi_a(t)$ is large, e.g., $\varphi_a(t) \geq \varphi(t')$, for $t\in [\tau_1,\tau_2]$ and $t'\notin [\tau_1,\tau_2]$. Then in this period, the attacker tend to observes more frequently. For example, if the goal of the attacker is to hit the asset at time $t_f$, the attacker may need to observe more frequently at the end of the game.

As is shown in Appendix \ref{proof:NecessaryCondition}, the differential of $f_a$ can be calculated analytically. The second-order differential can also be calculated analytically. Hence, we can leverage either first-order methods or second-order methods \cite{gill1991numerical} to numerically compute $t^*_{1}, t_{2}^*,\cdots, t^*_{N_a}$. Also, (\ref{eq:FirstOrderNecessaryCondition}) indicates that once $t_1^*$ is provided, $t_2^*,\cdots, t_{N_a}^*$ can be computed easily. Based on this feature, we propose a binary search algorithm that solves problem (\ref{eq:ObsrvationInstancesOp}). In Algorithm \ref{algorithm1}, we aim to find a $t_1^\star$ such that $\vert t_1^\star - t_{1}^*\vert <\epsilon/2$. Line $1$ initializes all the parameters in (\ref{eq:ObsrvationInstancesOp}). Line $2$ sets the initial low bound $t_{low}$ and upper bound $t_{up}$ of $t_1^*$ to be $0$ and $t_f$ respectively. The initial guess of $t_1$ is $(0+t_f)/2$. Line $5$ computes the left-hand side of (\ref{eq:FirstOrderNecessaryCondition}), which we rewrite as 
\begin{equation}\label{eq:FirstOrderNecessaryConditionLeft}
    l_a(t_{i-1},t_i)=\int_{t_{i-1}}^{t_{i}}\Tr\left[ e^{A_d(t_{i}-t)}C_dC_d'{e^{A_d(t_{i}-t)}}'\varphi_a(t_{i})\right]dt.
\end{equation}
Line $6$ computes the right-hand side integral in (\ref{eq:FirstOrderNecessaryCondition}) from $t_i$ to $t_f$, which we write as 
\begin{equation}\label{eq:FirstOrderNecessaryConditionRight}
    r_a(t_i,t_f)=\int_{t_{i}}^{t_{f}} \Tr\left[e^{A_d(t-t_{i})} C_dC_d' {e^{A_d(t-t_{i})}}'\varphi_a(t) \right]dt.
\end{equation}
Line $7$-$11$ says for any $t_i,i=1,2,\cdots,N_a$ that is computed based on our guess $t_1$, if $ r_a(t_i,t_f)<l_a(t_{i-1},t_i)$, then our guess $t_1$ is larger than $t_1^*$. Hence, we set the upper bound $t_{up}$ as $t_1$ and reset out guess $t_1$ as $t_1 = (t_{low} + t_1)/2$. Then we break the for loop and start with our new guess $t_1$. Line $12$ computes the next observation instance using (\ref{eq:FirstOrderNecessaryCondition}). Line $13$-$21$ says that when the for loop gets to $i=N_a$, we compute $t_{N_a+1}$. If $t_{N_a+1}<t_f$, our guess $t_1$ must be smaller than $t_1^*$. Hence, we set $t_{low} = t_1$, let our new guess to be $t_1 = (t_{up}+t_1)/2$, and breaks the for loop. If $t_{N_a+1} = t_f$ (it is impossible that $t_{N_a+1}>t_f$ due to our operations in Line $5$-$11$), then $t_1 = t_1^*$. Hence, we set $t_{low} = t_{up} =t_1$ to leave the while loop. Since the while ends when $\vert t_{up} - t_{low}\vert <\epsilon$,  we can ensure $\vert t_1^\star -t_1^* \vert<\epsilon/2$, where $t_1^*$ is the optimal first observation instance and $t_1^\star$ is the first observation instance found using Algorithm \ref{algorithm1}. The number of iterations needed for the while loop is less than $\min\{n\ |\ t_f/2^n\ \leq \epsilon\}$. For example, only $20$ iterations are needed to achieve $\epsilon = 10^{-5}$ when $t_f = 10$. Once $t_1^\star$ is obtained, the rest observation instances can be computed easily using  (\ref{eq:FirstOrderNecessaryCondition}). Note that with $f_a^*(N_a)$ being computed for some small $N_a$, a bound similar to yet tighter than (\ref{eq:BoundedObservationNumber}) can be developed. For example, when $f_a^*(N_a)$ is computed for $N_a=1,2,3$, if $N_a^*>3$, we have $f^*_a(3)+3O > O N_a^*$, i.e., $N_a^* -3 \leq f_a^*(3)/O$. Hence, we only need to compute $f_a^*(N_a)$ for a very limited number of $N_a$.

\begin{algorithm}[t]
\caption{Optimal Observation Instances Algorithm Based on Binary Search}\label{algorithm1}
\begin{algorithmic}[1]
\State Initialize $A_d$,$C_d$, $N_a$,$\varphi_a(\cdot)$, $t_f$, and tolerate$,\epsilon>0$
\State Set $t_{low}=0$, $t_0=0$, $t_{up}=t_f$, and $t_1 = (t_{up}+t_{low})/2$
\While {$|t_{up}-t_{low}|>\epsilon$}
\For{$i = 1,\cdots,N_a$}
\State  Compute $\textrm{val} = l_a(t_{i-1},t_i)$ defined in (\ref{eq:FirstOrderNecessaryConditionLeft})
\State Compute $\textrm{val}' = r_a(t_i,t_f)$ defined in (\ref{eq:FirstOrderNecessaryConditionRight})
\If{$\textrm{val}' < \textrm{val}$}
\State $t_{up} = t_{1}$
\State $t_1 = (t_{low}+t_1)/2$
\State \textbf{break}
\EndIf
\State Compute $t_{i+1}$ using (\ref{eq:FirstOrderNecessaryCondition})
\If{$i=N_a$} 
\If{$t_{i+1}<t_f$}
\State $t_{low} =t_1$
\State $t_{1} = (t_{up} + t_1)/2$
\State \textbf{break}
\Else
\State $t_{low} = t_{up} = t_1$
\EndIf
\EndIf
\EndFor
\EndWhile
\State \textbf{return} $t_1^\star = (t_{low} + t_{up})/2$
\end{algorithmic}
\end{algorithm}
In the next section, we provide case studies to demonstrate the computation and the effectiveness of observation instances and offer more insights.

\section{Case Studies}\label{sec:CaseStudy}
We consider a ``simple motion'' dynamics of two players  on a $2$-D plane. This dynamic model has been widely adopted by existing works \cite{garcia2020two, li2011defending, talebi2017cooperative,garcia2020defense}. The dynamics (\ref{eq:SeparateDynamics}) becomes 
\begin{equation}\label{eq:CaseStudyDynamics}
\begin{aligned}
d{x}_a &= b_a \cdot \Id_2 u_a(t)dt + C_a {dw_a(t)},\ \ \ x_a(0) = x_{a0},\\
d{x}_d &= b_d \cdot\Id_2 u_d(t)dt + C_d {dw_d(t)},\ \ \ x_d(0) = x_{d0},\\
\end{aligned}
\end{equation}
where $x_a\in\mathbb{R}^2$ and $x_d\in\mathbb{R}^2$, $b_a$ and $b_d$ are scalars that describe the maneuverability of the attacker and the defender respectively. Suppose the attacker and the defender only care the terminal state, i.e., $\omega_d^I = \omega_a^I =0$, and hence $Q=\mathbf{0}$.

\begin{proposition}
For system (\ref{eq:CaseStudyDynamics}) with $\omega_d^I = \omega_a^I = 0$, the Riccati equation admits a bounded closed-form solution
\begin{equation}\label{eq:CaseStufyRiccatiSolution}
K_{11}(t) = k(t)\begin{bmatrix}
\kappa_1(t) & -1\\
-1 & \kappa_2(t)
\end{bmatrix}\otimes \Id_2,
\end{equation}
where 
$$
\begin{aligned}
&\kappa_1(t) = -b_d^2(t-t_f)\omega_a +1 -\omega_a/\omega_b,\\
&\kappa_2(t) = b_a^2(t-t_f)\omega_a +1,\\
&k(t)=\\
& \frac{\omega_a\omega_d}{\left[ \omega_a b_a^2(t-t_f)+1\right]\left[ -\omega_a\omega_db_d^2(t-t_f)+\omega_d -\omega_a\right] -\omega_d}.\\
\end{aligned}
$$
The optimal observation instances $t_1^*,t_2^*,\cdots,t_{N_a}^*$ need to satisfy
\begin{equation}\label{eq:CaseStudyNecConditions}
k^2(t_i^*) (t_i^* - t^*_{i-1}) = \int_{t_i^*}^{t^*_{i+1}} k^2(t)dt.
\end{equation}
\end{proposition}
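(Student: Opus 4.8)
The plan is to specialize the decomposed Riccati equation \eqref{eq:SubRiccatiEquation} to the simple-motion data \eqref{eq:CaseStudyDynamics}, solve the resulting low-dimensional Riccati ODE in closed form, and then feed the explicit $K_{11}^{ur}$ into the necessary condition \eqref{eq:FirstOrderNecessaryCondition} of Theorem~\ref{theorem:NecessaryCondition}. Under \eqref{eq:CaseStudyDynamics} we have $A_a=A_d=\mathbf{0}$, so $\hat{A}=\mathbf{0}$ in \eqref{eq:TruncatedSystemPara}; $\tilde{B}_a=b_a\Id_2$ and $\tilde{B}_d=b_d\Id_2$, so $\hat{B}_d\hat{B}_d'-\hat{B}_a\hat{B}_a'=\mathrm{diag}(-b_a^2\Id_2,\,b_d^2\Id_2)$; and $\omega_a^I=\omega_d^I=0$ forces $Q=\mathbf{0}$ (hence $Q_{11}=\mathbf{0}$), while $Q_{f,11}$ is the top-left $2n\times 2n$ block of $\tilde{Q}(\omega_a,\omega_d)$, i.e.\ $\bigl[\begin{smallmatrix}(\omega_a-\omega_d)\Id_2 & \omega_d\Id_2\\ \omega_d\Id_2 & -\omega_d\Id_2\end{smallmatrix}\bigr]$. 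Every coefficient in \eqref{eq:SubRiccatiEquation} is thus a $2\times2$ matrix tensored with $\Id_2$, a structure the equation preserves, so the ansatz $K_{11}(t)=P(t)\otimes\Id_2$ collapses \eqref{eq:SubRiccatiEquation} to the $2\times2$ Riccati ODE $\dot{P}=-P\,M\,P$ with $M=\mathrm{diag}(-b_a^2,b_d^2)$ and $P(t_f)=\bigl[\begin{smallmatrix}\omega_a-\omega_d & \omega_d\\ \omega_d & -\omega_d\end{smallmatrix}\bigr]$. Since this Riccati carries no linear or affine term, the substitution $P=Y^{-1}$ (legitimate because $\det P(t_f)=-\omega_a\omega_d\neq0$) turns it into $\dot{Y}=M$, whence $Y(t)=P(t_f)^{-1}+M(t-t_f)$ is affine in $t$; inverting this explicit $2\times2$ matrix and reading off its entries produces the closed form \eqref{eq:CaseStufyRiccatiSolution}, with $k(t)$ proportional to $1/\det Y(t)$ and $\kappa_1(t),\kappa_2(t)$ the normalized cofactors, and a direct substitution back into $\dot{P}=-PMP$ together with the check $P(t_f)\otimes\Id_2=Q_{f,11}$ confirms it.

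Boundedness of $K_{11}$ on $[0,t_f]$ then reduces to showing that the denominator of $k(t)$ --- which, up to the nonzero constant $\omega_a^2\omega_d$, equals $\det Y(t)$ --- has no zero on $[0,t_f]$. That determinant is a quadratic polynomial in $t-t_f$ with leading coefficient $-b_a^2b_d^2$ and value $-1/(\omega_a\omega_d)$ at $t=t_f$, hence strictly negative near $t_f$; it stays bounded away from zero over the whole horizon under a mild condition on $t_f$ and the weights (checkable directly by locating the two roots of the quadratic, or via \cite[Corollary~5.13]{engwerda2005lq} as in the first remark of this section), which is precisely the statement that the finite escape time of $P$ lies outside $[0,t_f]$.

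For the necessary conditions on the optimal observation instances I would invoke Theorem~\ref{theorem:NecessaryCondition} directly. Because $A_d=\mathbf{0}$, the matrix exponentials $e^{A_d(\cdot)}$ in \eqref{eq:FirstOrderNecessaryCondition} reduce to $\Id_2$ and $\Sigma_{1,d}(\tau)=\tau\,C_dC_d'$, while \eqref{eq:CaseStufyRiccatiSolution} gives the upper-right $n\times n$ block $K_{11}^{ur}(t)=-k(t)\Id_2$, so $\varphi_a(t)=K_{11}^{ur}(t)'\tilde{B}_a\tilde{B}_a'K_{11}^{ur}(t)=b_a^2\,k(t)^2\,\Id_2$. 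Substituting into \eqref{eq:FirstOrderNecessaryCondition}, the left-hand side becomes $b_a^2\Tr(C_dC_d')\,k(t_i^*)^2(t_i^*-t_{i-1}^*)$ and the right-hand side becomes $b_a^2\Tr(C_dC_d')\int_{t_i^*}^{t_{i+1}^*}k(t)^2\,dt$, and cancelling the common positive factor $b_a^2\Tr(C_dC_d')$ yields exactly \eqref{eq:CaseStudyNecConditions}. The only genuinely delicate point is the boundedness claim above --- verifying that the explicit denominator never vanishes on $[0,t_f]$; the Kronecker reduction, the $P=Y^{-1}$ linearization, and the scalar cancellations in the necessary condition are all routine bookkeeping.
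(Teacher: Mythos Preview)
Your proposal is correct and follows essentially the same route as the paper: you linearize the purely quadratic Riccati equation by passing to the inverse ($P=Y^{-1}$, the paper's $\Gamma=K_{11}^{-1}$), integrate the resulting constant-coefficient linear ODE, and invert the explicit $2\times2$ affine matrix; the derivation of \eqref{eq:CaseStudyNecConditions} by plugging $A_d=\mathbf{0}$ and $\varphi_a(t)=b_a^2k(t)^2\Id_2$ into \eqref{eq:FirstOrderNecessaryCondition} and cancelling $b_a^2\Tr(C_dC_d')$ is likewise identical. The only cosmetic difference is that you first strip off the $\otimes\Id_2$ factor via the Kronecker ansatz before inverting, whereas the paper inverts the full $4\times4$ matrix directly; your added remark on boundedness (locating the roots of $\det Y$) is a useful elaboration that the paper leaves implicit.
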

\begin{proof}
Under this setting, the riccati equation (\ref{eq:TruncatedSystemPara}) becomes
$$
\dot{K}_{11} =  - K_{11}\left(\hat{B}_d \hat{B}_d' - \hat{B}_a\hat{B}_a'\right)K_{11},\ \ \ \textrm{with }K_{11} = Q_{f,11}.
$$
Note that $Q_{f,11}$ is invertible. Indeed, we have
$$
Q^{-1}_{f,11} = \begin{bmatrix}
\frac{1}{\omega_a} & \frac{1}{\omega_a}\\
\frac{1}{\omega_a} & \frac{1}{\omega_a} - \frac{1}{\omega_d}\\
\end{bmatrix}\otimes \Id_2.
$$
Let $\Gamma = K_{11}^{-1}$ for some $[t_f- \Delta t, t_f]$. Since $K_{11} K_{11}^{-1} = \Id_4$, 
$$
\frac{d}{dt}K_{11}(t)K_{11}^{-1}(t) = \dot{K}_{11} K_{11}^{-1} +  K_{11} \dot{K}_{11}^{-1} =0,
$$
which gives $\dot{K}_{11}^{-1} = - K_{11}^{-1} \dot{K}_{11} K_{11}^{-1}$. Hence,
$$
\dot{\Gamma} = -K^{-1}_{11} \dot{K}_{11} K_{11}^{-1} = \hat{B}_a\hat{B}_a' - \hat{B}_d \hat{B}_d' =\begin{bmatrix}
b_a^2 & 0\\
0 & - b_d^2
\end{bmatrix} \otimes \Id_2.
$$ 
Since $\Gamma(t_f) = Q_{f,11}^{-1}$, we obtain
$$
\Gamma(t)  =\begin{bmatrix}
b_a^2 (t-t_f) + \frac{1}{\omega_a} & \frac{1}{\omega_a} \\
\frac{1}{\omega_a}  & -b_d^2(t-t_f) + \frac{1}{\omega_a}  - \frac{1}{\omega_d} 
\end{bmatrix}\otimes \Id_2.
$$
Since $K_{11}= \Gamma^{-1}$, we have (\ref{eq:CaseStufyRiccatiSolution}).

Then, we have $\varphi_a(t) = {K_{11}^{ur}}' \tilde{B}_a\tilde{B}_aK_{11}^{ur} = k^2(t)b_a^2\Id_2$ and similarly, $\varphi_b(t) = k^2(t) b_d^2 \Id_2$. Note that $A_a=A_d = 0$, which gives $\Sigma_{1,d}(t) = t C_d C_d'$ and $\Sigma_{2,a} = t C_a C_a'$. Hence,
$$
f_a^*(N_a) =  \Tr[ b_a^2 C_d C_d']\cdot \min_{t_1,\cdots, t_{N_a}} \sum_{i=0}^{N_a} \int_{t_i}^{t_{i+1}} k^2(t) ( t-t_i) dt.
$$
Furthermore, (\ref{eq:FirstOrderNecessaryCondition}) becomes
$$
\begin{aligned}
&\int_{t^*_{i-1}}^{t^*_{i}}\Tr\left[ C_dC_d'k^2(t_i^*) b_a^2 \right]dt= \int_{t^*_{i}}^{t^*_{i+1}} \Tr\left[C_dC_d'k^2(t) b_a^2 \right]dt,\\
&\Rightarrow k^2(t_i^*) (t_i^* - t^*_{i-1}) = \int_{t_i^*}^{t^*_{i+1}} k^2(t)dt.
\end{aligned}
$$
\end{proof}
Other parameters are set to be $\omega_d=3, C_d= C_a = 2\Id$, $b_a=b_d=0.5$. We use Algorithm \ref{algorithm1} to compute the optimal observation instances for the attacker. In fig. \ref{fig:RiccatiAndInstances}, we first plot the Riccati equation component $k^2(t)$ over $[0,t_f]$. The value of $k^2(t)$ captures the importance level of an observation at time $t$ for the attacker. For example, $k^2(t)$ attains its highest value around $5.3$. In an optimal solution, observations occur more frequently around time $5.3$. In fact, an observation at the beginning is not as valuable as an observation near the terminal time, as we can see from the curve of $k^2(\cdot)$. This is because $\omega^I_d =\omega^I_d =0$ and the attacker only cares about the relative positions between him/her and his/her opponents as well as the asset. We next present the optimal observation instances when the number of observations the attacker can take is limited to $N_a = 5$. The first observation occurs late at $t=2.8682$ and the fourth and the fifth observation instances are the closest. The distribution of the observation instances illustrates how $k^2(\cdot)$ affect the attacker's observation choices. The khaki box represents the left-hand side of (\ref{eq:CaseStudyNecConditions}) and the lavender area represents the right-hand side of (\ref{eq:CaseStudyNecConditions}). The necessary conditions (\ref{eq:CaseStudyNecConditions}) requires the areas of the two areas to be equal. This equality holds for each such neighboring areas.

%Indeed, $t_1^* =2.8682$, $t_2^*=3.9872$, $t_3^* =4.6122$, $t_4^* =5.0662$, and $t_5^* =5.4852$.

\begin{figure}
    \centering
    \includegraphics[width=0.8\columnwidth]{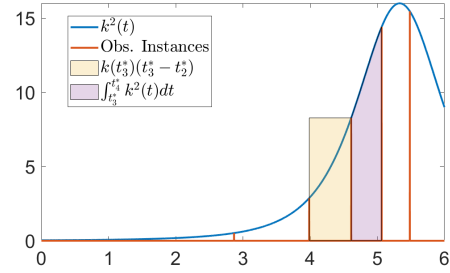}
    \caption{$k^2(t)$ is defined in (\ref{eq:CaseStufyRiccatiSolution}) and represented by blue line. The optimal observation instances $t_i^*,i=1,\cdots,5$ are represented by red impulses with height adjusted to $k^2(t_i^*)$. The area of the khaki box being equal to that of the lavender area illustrates the necessary conditions (\ref{eq:CaseStudyNecConditions}).   }
    \label{fig:RiccatiAndInstances}
\end{figure}

\begin{figure}
    \centering
    \includegraphics[width=1\columnwidth]{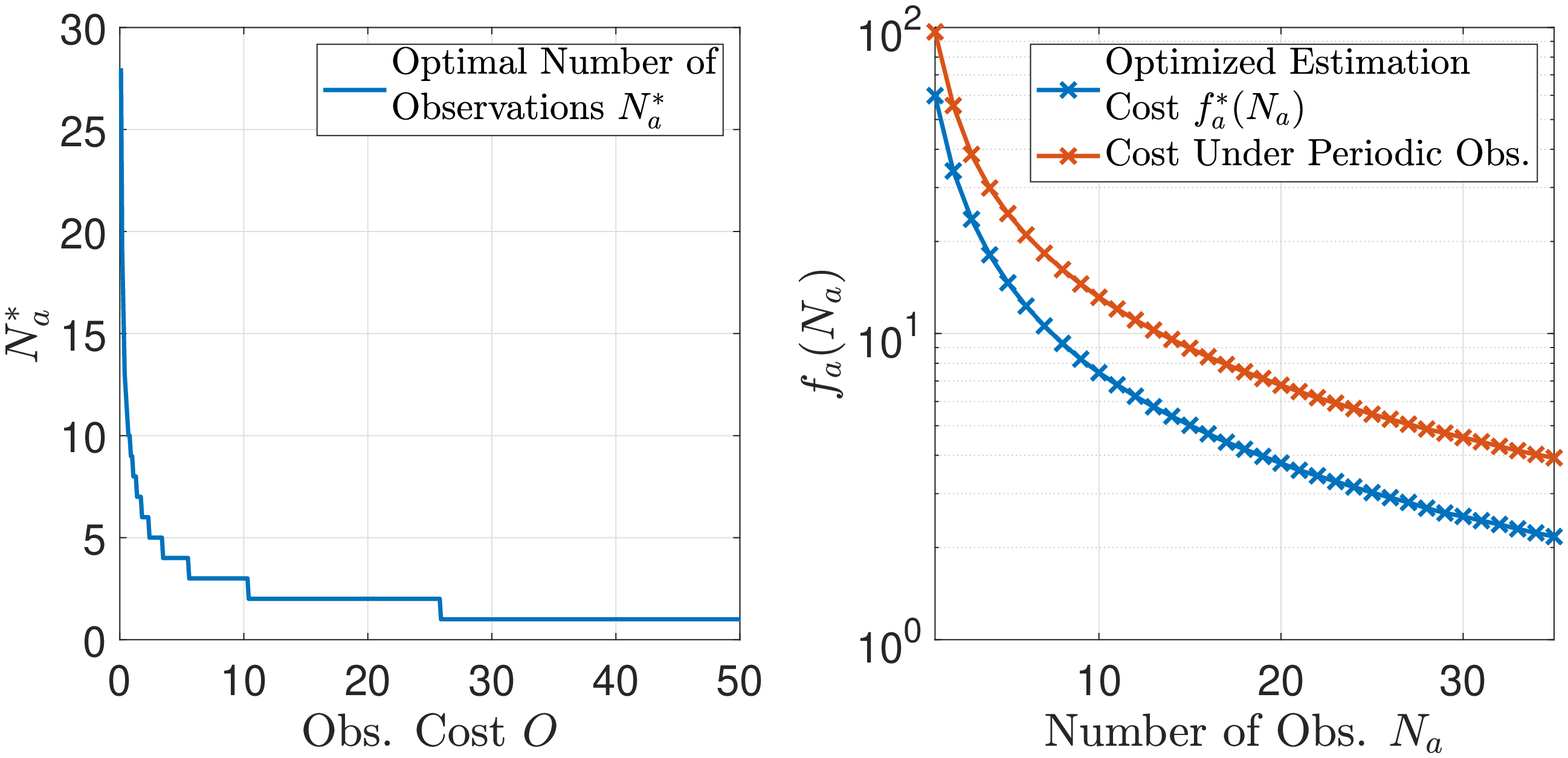}
    \caption{The optimal number of observations for the attacker under different observation costs (left). Given the number of observations $N_a$ allowed, the costs $f_a$ under two choices of observation instances: the periodic one (marked red) and the optimal one (marked blue) computed using Algorithm \ref{algorithm1} (right).}
    \label{fig:OptimalNumObservations}
\end{figure}

In Fig. \ref{fig:OptimalNumObservations}, the left figure shows that the optimal number of observations is inversely proportional to the observation cost. That means the defender can limit the performance of the attacker by increasing the attacker's observation cost. The increase of observation cost can be done, for example, by leveling up the defender's stealthiness. We compare two sets of observation instances in the right plot of Fig. \ref{fig:OptimalNumObservations}. One is the optimal observation instances $t_1^\star,t_2^\star,\cdots,t_{N_a}^{\star}$ calculated using Algorithm \ref{algorithm1} and one is the periodic observation instances chosen as $t_i= i\cdot t_f/(N_a+1)$. The optimal observation instances induce less cost than the periodic observation instances given the same number of observations $N_a$. More precisely, by adopting the optimal observation instances, the attacker can reduce at least $30\%$ of the cost generated by adopting the periodic observation instances. Similar results can also be obtained for the defender.

\begin{figure}
    \centering
    \includegraphics[width=0.9\columnwidth]{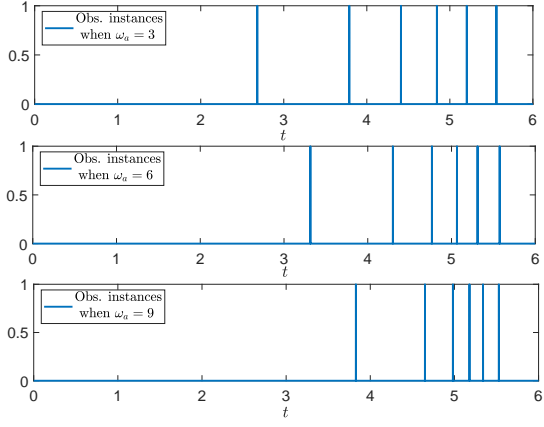}
    \caption{When $N_a=5$, the optimal observation instances under different $\omega_a$.}
    \label{fig:ObservationInstances}
\end{figure}

In Fig. \ref{fig:ObservationInstances}, we present the optimal time instances under different $\omega_a$. Note that $\omega_a$ is the weight assigned to the terminal distance between the attacker and the defender, as one can see in (\ref{Eq:CostFunctional}). Even though the location of the asset is fully known to the attacker and the observations are made to observe the defender's location, the weight assigned to the distance between the attacker and the asset still affect the choice of the optimal observation instances. As $\omega_a$ increases, the attacker tends to observe late and observation instances get closer to each other. This is due to the fact that $\omega_a$ increases the terminal cost while the transient cost is zero because $\omega_a^I=\omega_d^I =0$. 

\section{Conclusions}

In this paper, we look into a DADG with partial information and selected observations. Due to the LQG formulation, both players' observation choices are independent of the asset's trajectory. The Nash observation game can be decoupled into two separate optimization problems. A set of necessary conditions has been exploited to develop an effective algorithm to compute the optimal observation instances. Case studies show that the optimal observation instances outperform the periodic observation strategy by at least $30\%$. Future works can focus on studying noisy observation settings and investigating statistic results regarding success rate and capturability rate.

%\addtolength{\textheight}{-12cm}   % This command serves to balance the column lengths
                                  % on the last page of the document manually. It shortens
                                  % the textheight of the last page by a suitable amount.
                                  % This command does not take effect until the next page
                                  % so it should come on the page before the last. Make
                                  % sure that you do not shorten the textheight too much.()

%%%%%%%%%%%%%%%%%%%%%%%%%%%%%%%%%%%%%%%%%%%%%%%%%%%%%%%%%%%%%%%%%%%%%%%%%%%%%%%%

%%%%%%%%%%%%%%%%%%%%%%%%%%%%%%%%%%%%%%%%%%%%%%%%%%%%%%%%%%%%%%%%%%%%%%%%%%%%%%%%

%%%%%%%%%%%%%%%%%%%%%%%%%%%%%%%%%%%%%%%%%%%%%%%%%%%%%%%%%%%%%%%%%%%%%%%%%%%%%%%%
%\appendix

%Appendixes should appear before the acknowledgment.

%\section*{ACKNOWLEDGMENT}

%The preferred spelling of the word ÒacknowledgmentÓ in America is without an ÒeÓ after the ÒgÓ. Avoid the stilted expression, ÒOne of us (R. B. G.) thanks . . .Ó  Instead, try ÒR. B. G. thanksÓ. Put sponsor acknowledgments in the unnumbered footnote on the first page.

%%%%%%%%%%%%%%%%%%%%%%%%%%%%%%%%%%%%%%%%%%%%%%%%%%%%%%%%%%%%%%%%%%%%%%%%%%%%%%%%

%References are important to the reader; therefore, each citation must be complete and correct. If at all possible, references should be commonly available publications.

\bibliography{references}

\bibliographystyle{IEEEtran}

\appendix
\subsection{Proof of Theorem \ref{theorem:NecessaryCondition}}\label{proof:NecessaryCondition}
\begin{proof}
First, note that $f_a(t_1,t_2,\cdots,t_{N_a})$ is differentiable over $t_1,t_2,\cdots,t_{N_a}$. Indeed, using Leibniz integral rule, we obtain
{\small$$
\begin{aligned}
&\frac{\partial}{\partial t_i} f_a(t_1,\cdots,t_{N_a})\\ 
=& \frac{\partial}{\partial t_i}\Bigg[ \int_{t_{i-1}}^{t_i} \Tr[\Sigma_{1,d}(t-t_{i-1})\varphi_a(t)]dt &\\
&-\int_{t_{i}}^{t_i+1} \Tr[\Sigma_{1,d}(t-t_{i})\varphi_a(t)]dt \Bigg]\\
=&\int_{t_{i-1}}^{t_{i}}\Tr\left[ e^{A_d(t_{i}-t)}C_dC_d'{e^{A_d(t_{i}-t)}}'\varphi_a(t_{i})\right]dt\\
&-\int_{t_{i}}^{t_{i+1}} \Tr\left[e^{A_d(t-t_{i})} C_dC_d' {e^{A_d(t-t_{i})}}'\varphi_a(t) \right]dt.
\end{aligned}
$$}
Hence, $f_a$ is continuous over $t_{1},t_2,\cdots, t_{N_a}$. Besides, the constraint set in (\ref{eq:ObsrvationInstancesOp}) is closed and bounded (hence compact). Then by Weierstrass extreme value theorem, there exists at least one minimizer $t^*_1,t^*_2,\cdots, t^*_{N_a}$ for problem (\ref{eq:ObsrvationInstancesOp}) for any given $N_a$. To obtain the first-order optimality condition, we let $\frac{\partial }{\partial t_i} f_a(t_1,t_2,\cdots,t_{N_a})=0$, which yields (\ref{eq:FirstOrderNecessaryCondition}).
Further, notice that 
$$
\tilde{J}_a(\mathcal{T}^*_a) \leq \tilde{J}_a(\varnothing) = f_a^*(0) = \int_{0}^{t_f} \Tr[\Sigma_{1,d}(t)\varphi_a(t)],
$$
and
$
\tilde{J}_a(\mathcal{T}^*_a) > O N_a^*.
$
Hence, we have $O N_a^* < \int_{0}^{t_f} \Tr[\Sigma_{1,d}(t)\varphi_a(t)]$, which gives (\ref{eq:BoundedObservationNumber}). Since $N_a^* \in\mathbb{N}$ is bounded and $f^*(N_a)$ for any $N_a$, there always exist $N_a^*$ and $t_1^*,\cdots, t_{N_a}^*$ that solve $\min_{\mathcal{T}_a} \tilde{J}_a(\mathcal{T}_a)$.
\end{proof}

% \small
% $$
% \begin{aligned}
% &\begin{bmatrix}
% K_{11}^{ul} & K_{11}^{ur}\\
% {K_{11}^{ur}}' & K_{11}^{br}
% \end{bmatrix}
% \begin{bmatrix}
% A_a & 0\\
% 0 & A_d \\
% \end{bmatrix} + \begin{bmatrix}
% A_a' & 0\\
% 0 & A_d' \\
% \end{bmatrix}\begin{bmatrix}
% K_{11}^{ul} & K_{11}^{ur}\\
% {K_{11}^{ur}}' & K_{11}^{br}
% \end{bmatrix}+ 
% \begin{bmatrix}
% (\omega_a^I - \omega_d^I) \Id & \omega^I_d \Id \\ 
% \omega_d^I \Id & -\omega^I_d \Id \\
% \end{bmatrix} + \begin{bmatrix}
% K_{11}^{ul} & K_{11}^{ur}\\
% {K_{11}^{ur}}' & K_{11}^{br}
% \end{bmatrix} \begin{bmatrix}
% \tilde{B}_d \tilde{B}_d' & 0\\
% 0 &-\tilde{B}_a \tilde{B}_a' \\
% \end{bmatrix}\begin{bmatrix}
% K_{11}^{ul} & K_{11}^{ur}\\
% {K_{11}^{ur}}' & K_{11}^{br}
% \end{bmatrix}
% \\ %break line
% &=\begin{bmatrix}
% K_{11}^{ul} A_a & K_{11}^{ur} A_d\\
% {K_{11}^{ur}}' A_a & K_{11}^{br} A_d \\ 
% \end{bmatrix}
% +\begin{bmatrix}
% A_a' K_{11}^{ul} & A_a ' K_{11}^{ur}\\
% A_d' {K_{11}^{ur}}' & A_d ' K_{11}^{br}\\
% \end{bmatrix} + \begin{bmatrix}
% (\omega_a^I - \omega_d^I) \Id & \omega^I_d \Id \\
% \omega_d^I \Id & -\omega^I_d \Id \\
% \end{bmatrix} + 
% \begin{bmatrix}
% -K_{11}^{ul} \tilde{B}_a \tilde{B}_a'K_{11}^{ul} +K_{11}^{ur} \tilde{B}_d \tilde{B}_d' {K_{11}^{ur}}'     &  -K_{11}^{ul} \tilde{B}_a \tilde{B}_a'K_{11}^{ur} +K_{11}^{ur} \tilde{B}_d\tilde{B}_d' K_{11}^{br}\\
% -{K_{11}^{ur}}' \tilde{B}_a \tilde{B}_a'K_{11}^{ul} +K_{11}^{br} \tilde{B}_d \tilde{B}_d' K_{11}^{br} & {K_{11}^{ur}}' \tilde{B}_d \tilde{B}_d'K_{11}^{ur}-K_{11}^{br} \tilde{B}_a \tilde{B}_a' K_{11}^{br}\\
% \end{bmatrix}
% \end{aligned}
% $$
\end{document}